\documentclass[11pt,fleqn]{article}
\usepackage[utf8]{inputenc}
\usepackage{amsmath,amssymb,graphicx,amsthm,dsfont}
\usepackage{color,soul}
\usepackage{xspace}
\usepackage{enumerate}
\usepackage{mathtools}
\usepackage{bbm}
\usepackage[numbers]{natbib}

\usepackage{multicol}
\usepackage{booktabs}
\usepackage{paralist}

\usepackage{subcaption}
\usepackage[noend,ruled,linesnumbered]{algorithm2e} 
\usepackage{xifthen}

 \usepackage[textsize=tiny,textwidth=1.5cm,linecolor=green!70!black, backgroundcolor=green!10, bordercolor=black,disable=true]{todonotes}%

\usepackage{mdframed}

\usepackage[colorlinks,citecolor=blue!80!black,linkcolor=red!60!black,pagebackref]{hyperref}
\renewcommand*{\backref}[1]{}
\renewcommand*{\backrefalt}[4]{%
\ifcase #1%
\marginpar{\tiny no cite}
\or
 $\rightarrow$~p.~#2.%
\else
  $\rightarrow$~pp.~#2.%
\fi
}

\usepackage{tikz}
\usetikzlibrary{decorations,arrows,petri,topaths,backgrounds,shapes,positioning,fit,calc,decorations.pathreplacing,patterns,intersections,decorations.pathmorphing,matrix,angles,quotes}

\tikzstyle{thickline} = [line width=1.8pt]
\tikzstyle{gl} = [draw, gray]
\tikzstyle{pl} = [draw, orange]
\tikzstyle{ml} = [draw, blue]
\tikzstyle{sett} = [draw, thick, purple]
\tikzstyle{ele} = [draw, thick, green!70!black]
\tikzstyle{nodea} = [fill=green!20!white]
\tikzstyle{nodeb} = [fill=red!20!white]
\makeatletter
\newcommand{\gettikzxy}[3]{%
  \tikz@scan@one@point\pgfutil@firstofone#1\relax
  \edef#2{\the\pgf@x}%
  \edef#3{\the\pgf@y}%
}
\makeatother

\usepackage{cleveref}

\newtheorem{theorem}{Theorem}

\theoremstyle{definition}

\crefname{table}{Table}{Tables}
\crefname{figure}{Figure}{Figures}
\crefname{theorem}{Theorem}{Theorems}
\crefname{definition}{Definition}{Definitions}
\crefname{corollary}{Corollary}{Corollaries}
\crefname{observation}{Observation}{Observations}
\crefname{lemma}{Lemma}{Lemmas}
\crefname{example}{Example}{Examples}
\crefname{reduction}{Reduction}{Reductions}
\crefname{construction}{Construction}{Constructions}
\crefname{subsection}{Subsection}{Subsections}
\crefname{section}{Section}{Sections}
\crefname{proposition}{Proposition}{Propositions}
\crefname{algorithm}{Algorithm}{Algorithms}
\crefname{claim}{Claim}{Claims}

\tikzstyle{alter} = [draw, circle, minimum size=4ex, inner sep=1pt, text centered, align=center]

\tikzstyle{nn} = [draw, circle, inner sep=.7pt,fill=black]
\tikzstyle{dnode} = [nn, rectangle, blue, fill=blue]

\newcommand{\myemph}[1]{{\color{green!40!black}\emph{#1}}}

\newcommand{\RR}{\ensuremath{\mathcal{R}}}
\newcommand{\AAA}{\ensuremath{\mathcal{A}}}

\newcommand{\DWO}{\ensuremath{2\text{-}\mathsf{WO}}}
\newcommand{\TWO}{\ensuremath{3\text{-}\mathsf{WO}}}
\newcommand{\WO}{\ensuremath{\mathsf{WO}}}
\newcommand{\mWO}{\ensuremath{m\text{-}\mathsf{WO}}}

\newcommand{\DEQ}{\ensuremath{2\text{-}\mathsf{EQ}}}

\newcommand{\MP}{\ensuremath{\mathsf{MP}}}

\title{\MP(\RR,\DWO) is Polynomial-Time Solvable}

\author{Jiehua Chen}
\date{31 October 2025\\
  TU Wien}
\begin{document}

\maketitle

\begin{abstract}
  We consider the median procedure~(Barthelemy and Monjardet, 1981) that aggregates a sequence $n$ of binary relations
  from some input class into a single binary relation from some (possibly different) output class,
  minimizing the number of disagreed order pairs.
  We show that if the output class should be a dichotomous weak order~($\DWO$),
  then the problem is polynomial-time solvable.
\end{abstract}

\section{Introduction}

The median procedure (\MP)~\cite{BarMon1981} aggregates $n$ binary relations~$(R_i)_{i\in [n]}$ to another binary relation~$O$ on the same set~$\AAA$ of $m$ alternatives by minimizing the number of disagreements.
Herein, the number of disagreements between two binary relations~$R$ and $R'$ on~$\AAA$ of alternatives is the number of ordered pairs that they disagree, i.e.,
\begin{align*}
  |\MP(R, R')| = |\{(a,b) \in \AAA\times \AAA : ((a,b)\in R \wedge (a,b)\notin R) \vee ((a,b)\notin R \wedge (a,b)\in R)\}|.
\end{align*}
For instance, for the following dichotomous weak orders~$R=\{(a,a),(b,b), (c,c), (a,b), (b,a), (a,c), (b,c)\}$ and $R'=\{(a,a),(b,b), (c,c), (b,c), (c,b), (b,a), (c,a)\}$,
the disagreements are $(a,b)$, $(a,c)$, $(c,a)$, and $(c,b)$. 

We are interested in the computational complexity of determining the minimum number of disagreements for input of certain binary relations and output of another certain type of binary relation.

For input and output that satisfy specific classes there are some known results.
To this end, for some fixed~$t$, let \myemph{$t$-chotomous weak preference order ($t$-\WO)} denote the binary relation which %
corresponds to an ordered partition of the alternatives into $t$ subsets~$(A_1,A_2,\dots,A_t)$ such that for each~$i,i'\in \{1,\dots,t\}$ with $i < i'$ every alternative in~$A_i$ is strictly preferred to every alternative in~$A_{i'}$ and all alternatives in the same subset are tied with each other. 
\citet{Zwicker2018JKKemeny} remarked the following: 
\begin{itemize}
  \item Computing $\MP(\mWO,\mWO)$ is the same as computing the Kemeny score~\cite{Kemeny59}; so it is NP-hard~\cite{BarTovTri1989}. 
  \item Computing $\MP(\DWO,\mWO)$ is the same as ranking by approval score; so it is polynomial-time solvable.
\end{itemize}
Similarly, one can adapt the reduction by \citet{BarTovTri1989} to show that
$\MP(\mWO,\TWO)$ remains NP-hard.
Recently, \citet{CHS2024} consider \myemph{dichotomous equivalence order}~(\myemph{\DEQ}), that is, unordered $2$-partitions~$\{A_1,A_2\}$ with $A_1\cup A_2=A$.
They show that $\MP(\DEQ,\DEQ)$ is NP-hard.
  
In this short note, we show that in contrast to $\DEQ$, computing $\MP(\RR,\DWO)$ is polynomial-time solvable.
Let $\RR=(R_1,R_2,\cdots, R_n)$ be a collection of~$n$ binary relations over~$\AAA$ that are reflexive and let $|\AAA|=m$.
For the sake of reasoning, we assume that each binary relation~$R_i$, $i\in \{1,2,\dots,n\}$, belongs to a distinct voter. 
For each pair of alternatives $a$ and $b$, let \myemph{$N(a,b)$} denote the number of voters that strictly prefer $a$ to $b$ (the corresponding binary relation contains $(a,b)$ but not $(b,a)$),
and \myemph{$E({a,b})$} denote the number of voters that are indifferent between $a$ and $b$
(i.e., the corresponding binary relation contains both $(a,b)$ and $(b,a)$).
Note that by definition, $E(a,b)=E(b,a)$ holds. %

The goal is to find a dichotomous weak order (\DWO) that minimizes the sum of disagreements to~$\RR$. 
Let~$\MP(\RR, \DWO)$ denote the minimum number of disagreements between the input and  any \DWO, that is, %
\begin{align*}
  \MP(\RR,\DWO) \coloneqq \min_{O\in \DWO}\left(\sum_{(a,b)\in O \wedge (b,a)\notin O} 2N(b,a)+E(b,a) + \sum_{(a,b), (b,a)\in O} N(a,b)+N(b,a)\right).
\end{align*}

In the following, we show that determining~$\MP(\RR,\DWO)$ is polynomial-time solvable by reducing to finding a minimum $s$-$t$-cut, which is well-known to be polynomial-time solvable~\cite{CorLeiRivSte2009}.

\section{Construction}
Given~$\RR=(R_1,R_2,\dots,R_n)$, construct a flow network $(G, s, t, w)$ with source $s$ and target $t$ and $m+(m-1)m$ nodes, where each node represents either an alternative: $a$, or an ordered pair of alternatives: $v_{(a,b)}$.

We add the following arcs:
   For each ordered pair of alternatives~$(a,b)$, add 
\begin{itemize}
  \item two arcs $(s,v_{(a,b)})$ and $(v_{(a,b)},t)$ with weight $N(a,b)$ and $N(b,a)$, respectively;
  \item two arcs~$(a,v_{(a,b)})$ and $(v_{(a,b)}, a)$ with a very large weight~$L$ (e.g., $L>(mn)^5$);
  \item two arcs~$(a,b)$ and $(b,a)$ with weights~$E(b,a)$ and $E(a,b)$, respectively; note that $E(a,b)=E(b,a)$.
\end{itemize}
The following is an illustration for the constructed network.

{\centering
  \begin{tikzpicture}[>=stealth, shorten <= 1pt, shorten >= 1pt]
  \node[draw, minimum height = 3ex, inner sep=1pt, circle] at (0,0) (s) {$s$};
  \node[draw, minimum height = 3ex, inner sep=1pt, circle] at (0,-4) (t) {$t$};
  \node[draw, minimum height = 3ex, inner sep=1pt, circle,nodea] at (-3.5,-2) (vab) {$v_{(a,b)}$};
  \node[draw, minimum height = 3ex, inner sep=1pt, circle,nodeb] at (3.5,-2) (vba) {$v_{(b,a)}$};

  \node[draw, minimum height = 3ex, inner sep=1pt, circle,nodea] at (-1.1,-2) (a) {$a$};
  \node[draw, minimum height = 3ex, inner sep=1pt, circle,nodeb] at (1.1,-2) (b) {$b$};

  \foreach \i / \j / \n / \d in {s/vab/{N(a,b)}/0, s/vba/{N(b,a)}/0, vab/t/{N(b,a)}/0, vba/t/{N(a,b)}/0, a/b/{E(b,a)}/30, b/a/{E(a,b)}/30,
  a/vab/L/10, vab/a/L/10, b/vba/L/10, vba/b/L/10} {
    \draw (\i) edge[->, midway, bend right=\d] node [inner sep=0pt,fill=white] {\footnotesize $\n$} (\j);
  }
\end{tikzpicture}
\par}

The goal is to find an $s$-$t$-partition~$(A,B)$ of the nodes with $s\in A$ and $t\in B$ such that the sum of the weights of the arcs from $A$ to $B$ 
\begin{align*}\sum_{(x,y) \text{ is an arc with } x\in A \text{ and } y\in B}w(x,y)\end{align*} is minimized.
We call such partition a \myemph{minimum cut}. 
Note that any \DWO\ corresponds to an ordered partition of the alternatives into two disjoint subsets, and we shall use a minimum cut to determine such an ordered partition.

\begin{theorem}
  If $(A,B)$ is a minimum cut, then the alternatives in~$A$ and $B$ define a DWO with minimum disagreements.
  If $(A',B')$ is a DWO with minimum disagreements,
  then $(A,B)$ is a minimum cut, where $A$ consists of node~$s$, the nodes corresponding to the alternatives in~$A'$, and the nodes~$v_{(x,y)}$ with $x\in A'$ and $y\in B'$, and $B$ consists the remaining nodes.
\end{theorem}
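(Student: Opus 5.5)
The plan is to show that cuts with no $L$-arc in them correspond one-to-one to ordered $2$-partitions $(A',B')$ of $\AAA$, and that the weight of such a cut equals the number of disagreements of the \DWO\ $(A',B')$. Both parts of the theorem then follow because cut weight and disagreement count coincide under this correspondence. I allow $A'$ or $B'$ to be empty, since a \DWO\ with an empty part is the order in which all alternatives are tied.

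\emph{Step 1: normal form of cuts.} For an ordered partition $(A',B')$ of $\AAA$, let $C(A',B')$ be the cut described in the theorem. Its $A$-side is $\{s\}\cup A'\cup\{v_{(x,y)}: x\in A'\}$. By Step 2, the weight of $C(A',B')$ is at most $\sum_{a\neq b}(N(a,b)+N(b,a)+E(a,b))\le 3m^2n<L$. In any cut, if $a$ and $v_{(a,b)}$ lie on different sides, one of the two arcs of weight $L$ between them is cut, so the cut weighs at least $L$. Hence every minimum cut puts each $v_{(a,b)}$ on the side of $a$. So every minimum cut equals $C(A',B')$, where $A'$ is the set of alternatives on the $s$-side and $B'=\AAA\setminus A'$.

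\emph{Step 2: weight of $C(A',B')$.} In $C(A',B')$ the cut arcs are as follows:
\begin{itemize}
  \item for $a\in A'$, the arc $(v_{(a,b)},t)$, of weight $N(b,a)$;
  \item for $a\in B'$, the arc $(s,v_{(a,b)})$, of weight $N(a,b)$;
  \item for $a\in A'$ and $b\in B'$, the arc $(a,b)$, of weight $E(a,b)$.
\end{itemize}
Group these terms by unordered pair $\{a,b\}$.
\begin{itemize}
  \item If $a\in A'$ and $b\in B'$, the node $v_{(a,b)}$ contributes $N(b,a)$ and $v_{(b,a)}$ contributes $N(b,a)$. Adding the arc $(a,b)$ gives $2N(b,a)+E(a,b)$. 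This is exactly the disagreement count: $O$ contains $(a,b)$ but not $(b,a)$, so each voter with $b\succ a$ disagrees on both ordered pairs, and each indifferent voter disagrees on $(b,a)$.
  \item If $a$ and $b$ lie on the same side, both ordered pairs are in $O$. The two $v$-nodes contribute $N(a,b)+N(b,a)$, once for each side. This is the number of voters who disagree on one of $(a,b)$ and $(b,a)$.
\end{itemize}
So the weight of $C(A',B')$ equals the disagreement of the \DWO\ $(A',B')$, counted per unordered pair. (I read the defining sum for $\MP(\RR,\DWO)$ in exactly this per-pair sense.)

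\emph{Step 3: conclude.} Let $(A,B)$ be a minimum cut with induced partition $(A',B')$. Any other \DWO\ $(A'',B'')$ gives a cut $C(A'',B'')$ of weight equal to its disagreement. Therefore the disagreement of $(A',B')$ is at most that of $(A'',B'')$, which proves the first claim. Conversely, let $(A',B')$ be an optimal \DWO\ and suppose some cut is lighter than $C(A',B')$. That cut weighs less than $L$, so by Step 1 it equals some $C(A'',B'')$. Then $(A'',B'')$ would have fewer disagreements than $(A',B')$, a contradiction, which proves the second claim.

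The main obstacle is the bookkeeping in Step 2. One must check that each node $v_{(a,b)}$ charges exactly the voters who disagree on the ordered pair $(b,a)$, or on $(a,b)$ in the tied case, and that the $E$-arcs are cut only from $A'$ to $B'$, matching the asymmetric cost $2N(b,a)+E(a,b)$. Choosing $L$ larger than the total finite weight makes Step 1 routine.
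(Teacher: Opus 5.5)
Your proposal is correct and follows the same route as the paper: the large weight $L$ forces every minimum cut to keep $v_{(a,b)}$ on the side of $a$. A per-unordered-pair case analysis (split versus same side) then shows that the cut weight equals the disagreement count of the corresponding \DWO. This matches the blue arcs in the paper's figures exactly, and your Step~3 spells out the conclusion that the sketch leaves unfinished. One caveat is that your $A$-side $\{s\}\cup A'\cup\{v_{(x,y)}: x\in A'\}$ is not literally the one in the statement, which puts only $v_{(x,y)}$ with $x\in A'$, $y\in B'$ on the $s$-side and therefore cuts an $L$-arc as soon as $|A'|\ge 2$. Yours is the correct version and agrees with the paper's own case analysis, so you should state this correction explicitly rather than attribute it to the theorem.
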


\begin{proof}[Proof sketch.]
 Since the weight between~$v_{(a,b)}$ and $a$ is very large, we can assume without loss of generality that in a minimum cut, $v_{(a,b)}$ and $a$ are always in the same partition.
 Now, consider a partition into $A$ and $B$ with $s\in A$ and $t\in B$.
 There are four cases regarding $a$ and $b$:
\begin{itemize}
  \item If $a\in A$ and $b\in B$ (lower left figure), then the voters who prefer $b$ to $a$ will contribute twice to the weight and the voters who are indifferent between~$a$ and $b$ will contribute once.
  Similarly, if $b\in A$ and $a\in B$, then the voters who prefer $a$ and $b$ will contribute twice to the weight and the voters who are indifferent will contribute once.
  Equivalently, we have the following blue arcs in the cut:
  
  \begin{tikzpicture}[>=stealth, shorten <= 2pt, shorten >= 2pt]
  \node[draw, minimum height = 3ex, inner sep=1pt, circle] at (-3,0) (s) {$s$};
  \node[draw, minimum height = 3ex, inner sep=1pt, circle] at (0,-3) (t) {$t$};
  \node[draw, minimum height = 3ex, inner sep=1pt, circle,nodea] at (3,0) (a) {$a$};
  \node[draw, minimum height = 3ex, inner sep=1pt, circle,nodea] at (0,0) (vab) {$v_{(a,b)}$};
  \node[draw, minimum height = 3ex, inner sep=1pt, circle,nodeb] at (-3,-3) (vba) {$v_{(b,a)}$};
  \node[draw, minimum height = 3ex, inner sep=1pt, circle,nodeb] at (3,-3) (b) {$b$};

  \foreach \i / \j / \n / \d / \c in {s/vab/{N(a,b)}/0/gray, s/vba/{N(b,a)}/0/blue, vab/t/{N(b,a)}/0/blue, vba/t/{N(a,b)}/0/gray, a/b/{E(b,a)}/40/blue, b/a/{E(a,b)}/40/gray, b/vba/L/-15/gray, a/vab/L/0/gray} {
    \draw[\c] (\i) edge[->, midway, bend right=\d] node [inner sep=0pt,fill=white] {\footnotesize $\n$} (\j);
  }
\end{tikzpicture}
\qquad~~\begin{tikzpicture}[>=stealth, shorten <= 1pt, shorten >= 1pt]
  \node[draw, minimum height = 3ex, inner sep=1pt, circle] at (-3,0) (s) {$s$};
  \node[draw, minimum height = 3ex, inner sep=1pt, circle] at (0,-3) (t) {$t$};
  \node[draw, minimum height = 3ex, inner sep=1pt, circle,nodea] at (3,-3) (a) {$a$};
  \node[draw, minimum height = 3ex, inner sep=1pt, circle,nodea] at (-3,-3) (vab) {$v_{(a,b)}$};
  \node[draw, minimum height = 3ex, inner sep=1pt, circle,nodeb] at (0,0) (vba) {$v_{(b,a)}$};
  \node[draw, minimum height = 3ex, inner sep=1pt, circle,nodeb] at (3,0) (b) {$b$};

  \foreach \i / \j / \n / \d / \c in {s/vab/{N(a,b)}/0/blue, s/vba/{N(b,a)}/0/gray, vab/t/{N(b,a)}/0/gray, vba/t/{N(a,b)}/0/blue, a/b/{E(b,a)}/40/gray, b/a/{E(a,b)}/40/blue, b/vba/L/0/gray, a/vab/L/-15/gray} {
    \draw[\c] (\i) edge[->, midway, bend right=\d] node [inner sep=0pt,fill=white] {\footnotesize $\n$} (\j);
  }

\end{tikzpicture}
  \item If $a$ and $b$ are in the same subset (either both in~$A$ or both in~$B$), then the voters who are not indifferent will count once.
  Equivalently, we have the following blue arcs in the cut:
  
  \begin{tikzpicture}[>=stealth, shorten <= 1pt, shorten >= 1pt]
    \node[draw, minimum height = 3ex, inner sep=1pt, circle] at (0,0) (s) {$s$};
    \node[draw, minimum height = 3ex, inner sep=1pt, circle] at (0,-3) (t) {$t$};
    \node[draw, minimum height = 3ex, inner sep=1pt, circle,nodea] at (4,0) (a) {$a$};
    \node[draw, minimum height = 3ex, inner sep=1pt, circle,nodea] at (2.5,0) (vab) {$v_{(a,b)}$};
    \node[draw, minimum height = 3ex, inner sep=1pt, circle,nodeb] at (-2.5,0) (vba) {$v_{(b,a)}$};
    \node[draw, minimum height = 3ex, inner sep=1pt, circle,nodeb] at (-4,0) (b) {$b$};
    
    \foreach \i / \j / \n / \d / \c in {s/vab/{N(a,b)}/0/gray, s/vba/{N(b,a)}/0/gray, vab/t/{N(b,a)}/0/blue, vba/t/{N(a,b)}/0/blue, a/b/{E(b,a)}/20/gray, b/a/{E(a,b)}/-30/gray, b/vba/L/0/gray, a/vab/L/0/gray} {
      \draw[\c] (\i) edge[->, midway, bend right=\d] node [inner sep=0pt,fill=white] {\footnotesize $\n$} (\j);
    }
    
    \begin{scope}[xshift=0.5\textwidth]
      \node[draw, minimum height = 3ex, inner sep=1pt, circle] at (0,0) (s) {$s$};
      \node[draw, minimum height = 3ex, inner sep=1pt, circle] at (0,-3) (t) {$t$};
      \node[draw, minimum height = 3ex, inner sep=1pt, circle,nodea] at (4,-3) (a) {$a$};
      \node[draw, minimum height = 3ex, inner sep=1pt, circle,nodea] at (2.5,-3) (vab) {$v_{(a,b)}$};
      \node[draw, minimum height = 3ex, inner sep=1pt, circle,nodeb] at (-2.5,-3) (vba) {$v_{(b,a)}$};
      \node[draw, minimum height = 3ex, inner sep=1pt, circle,nodeb] at (-4,-3) (b) {$b$};
    
      \foreach \i / \j / \n / \d / \c in {s/vab/{N(a,b)}/0/blue, s/vba/{N(b,a)}/0/blue, vab/t/{N(b,a)}/0/gray, vba/t/{N(a,b)}/0/gray, a/b/{E(b,a)}/-25/gray, b/a/{E(a,b)}/35/gray, b/vba/L/0/gray, a/vab/L/0/gray} {
        \draw[\c] (\i) edge[->, midway, bend right=\d] node [inner sep=0pt,fill=white] {\footnotesize $\n$} (\j);
      }
  \end{scope}
\end{tikzpicture}
\end{itemize}
Since every $s$-$t$-cut~$(A,B)$ corresponds to a possible DWO and the capacity of every minimum cut is equal to the number of disagreements of the corresponding DWO, 
\end{proof}

\bibliographystyle{abbrvnat}
\bibliography{bib}
\end{document}